\newtheorem{lemma}{Lemma}
\newtheorem{example}{Example}
\begin{document}
\title{Two-Phase Cooperative Broadcasting Based on Batched Network Code}
\author{Xiaoli~Xu, Praveen Kumar M. Gandhi, Yong~Liang~Guan,  and Peter~Han~Joo~Chong
\thanks{Part of this work will be presented in IEEE International Conference on Communications (ICC), London, UK, June 2015.}
\thanks{The authors are with the School of Electrical and Electronic Engineering, Nanyang Technological University, Singapore 639798
(email: \{xuxiaoli, meenakshi, eylguan, ehjchong\}@ntu.edu.sg).}
%\thanks{This work was supported by the Advanced Communications Research Program DSOCL06271, a research grant from the Directorate of Research and Technology (DRTech), Ministry of Defence, Singapore.}
}

\maketitle

\begin{abstract}
In this paper, we consider the wireless broadcasting scenario with a source node sending some common information to a group of closely located users, where each link is subject to certain packet erasures. To ensure reliable information reception by all users, the conventional approach generally requires repeated transmission by the source until all the users are able to decode the information, which is inefficient in many practical scenarios.  In this paper, by exploiting the close proximity among the users,  we propose a novel two-phase wireless broadcasting protocol with  user cooperations based on an efficient batched network code, known as batched sparse (BATS) code.  In the first phase, the information packets are encoded into batches with BATS encoder and sequentially broadcasted by the source node until certain terminating criterion is met. In the second phase,  the users cooperate with each other by exchanging the network-coded information via peer-to-peer (P2P) communications based on  their respective received packets.  A fully distributed and light-weight scheduling algorithm is proposed to improve the efficiency of the P2P communication in the second phase. The performance of the proposed two-phase protocol is analyzed and the channel rank distribution at the instance of decoding is derived, based on which the optimal BATS code is designed. Simulation results demonstrate that the proposed protocol significantly outperforms the existing schemes.  Lastly, the performance of the proposed scheme is further verified via testbed experiments.
\end{abstract}

\begin{IEEEkeywords}
BATS code, P2P communications, cooperative broadcasting, scheduling, channel rank distribution.
\end{IEEEkeywords}

\section{Introduction}
Wireless broadcasting, by which some  common information is transmitted from a source node to a set of receiver nodes through wireless channels, has a wide range of applications \cite{Zoman1994}, such as satellite communication, video streaming, and file distribution. The main design criterion for wireless broadcasting systems is to ensure reliable information reception by all users, who may experience quite different channel conditions due to channel fading, interference and/or congestions. The two most common approaches for ensuring reliable broadcasting are retransmission and coding \cite{Luby2006}. With the simple ``repeat request-retransmission" scheme, the source retransmits the lost packets upon receiving a negative acknowledgement (NAK) from any of the receivers. Although simple for implementation, this scheme typically results in poor bandwidth efficiency. On the other hand, the coding based approach, such as forward erasure correction coding, though more efficient, usually incurs high encoding/decoding complexity and severe delays. More recently, network coding based schemes have been proposed  to improve the efficiency of retransmission schemes \cite{Nguyen2008,Fragouli2006}. However, such schemes rely heavily on the prompt and accurate feedbacks from all the receivers, which are difficult to be achieved in practical communication systems.

All the aforementioned broadcasting schemes assume that there is no cooperation among the receivers, and hence reliable broadcasting can only be achieved via source retransmissions, either with or without coding. In this paper, we consider the scenario where the source node is intended to broadcast some common information to a group of users that are closely located within a small region. Such a setup models various practical communication scenarios, e.g., video streaming from a base station to a group of nearby  mobile users in cellular networks,  communication from a source user to a squadron of destination users in ad-hoc networks, etc. For such scenarios, the secondary channels between the destination users are usually  more reliable than the primary channels from the source node, due to the shorter distance.  By exploiting this fact, we propose a two-phase cooperative broadcasting scheme based on batched sparse (BATS) code, with limited source broadcasting in the first phase and network coded peer-to-peer (P2P) packet exchange in the second phase to achieve reliable decoding by all users.

Batch sparse (BATS) code is a joint fountain code and network code, first proposed in \cite{Yang11} for achieving the optimal throughput of wireless erasure networks with finite coding length \cite{Xu2014}. Compared with the pure fountain code, BATS code achieves higher throughput by allowing the intermediate nodes to re-encode the packets. Compared with random linear network coding (RLNC) \cite{Ho06}, BATS code requires smaller buffer size and has efficient decoding method based on belief propagation \cite{Yang2014}.  Furthermore, since network coding is only performed within the batch, the overhead used to trace network coding coefficients is much smaller compared with RLNC.

In the first phase of the proposed protocol, the information packets are encoded into batches with BATS encoder. These encoded batches are sequentially broadcasted by the source node until certain terminating criterion, which may be designed to minimize the number of \emph{source} transmissions or  the total number of transmissions, is met. For the first design objective, the source can stop transmission immediately when the user group, if allowed to decode cooperatively,  is able to recover the file, although each individual user still cannot decode the information based on their own received packets. For the second design objective, the number of batches sent by the source is optimized so that the total number of transmissions in both phases is minimized.

In the second phase, the users help each other by broadcasting to their peers via  P2P communications based on their respective received packets from phase 1.  P2P erasure repair  has been previously studied for wireless video broadcasting in  Multimedia Broadcast/Multicast Service (MBMS) applications \cite{Sanigepalli2006}, where various scheduling schemes have been proposed. Later, an adaptive scheduling scheme was proposed in \cite{BOPPER} to  improve the efficiency of the P2P repair. Given the global state information of all users, a cooperative P2P repair (CPR) problem was formulated in \cite{Cheung2006}, which has been proved to be NP-hard. A suboptimal distributed CPR algorithm was proposed in \cite{CPR}. However, the proposed scheme still heavily relies on the exchange of perfect control information, which is difficult to be achieved when the inter-user links are lossy. Besides, the efficiency of the retransmission in this scheme is relatively low.  The CPR algorithm in \cite{CPR} has also been extended to network coded CPR (NC-CPR) in \cite{DCPR} by applying random linear network coding\cite{Ho06} in P2P communications. A distributed scheduling scheme was proposed in \cite{DCPR} based on the intuition that the user with more innovative packets should transmit earlier. However, when the number of users and the number of packets increase, the transmission overhead, which includes transmission of the network coding coefficients and control information, may overwhelm the gain of the proposed scheme. By integrating the idea of NC-CPR \cite{DCPR} and rarest first scheduling \cite{Legout2006}, a light-weight peer scheduling algorithm, termed ``cooperative Peer-to-peer Information Exchange (PIE)", was proposed in \cite{Fan2010}. Furthermore, when the users are not fully connected, a cluster based repair,  where the users are grouped into clusters with one user assigned as the cluster head (CH), have been shown to be more efficient than traditional P2P repair \cite{CDS}. The CH collects the information packets from its cluster members (CM), which are then exchanged with other CHs via P2P communications.  The tradeoff between the intra-cluster and inter-cluster repairs was studied in \cite{CDSextension}. To reduce the network coding overhead, an XOR network coding scheme was proposed to replace the RLNC for P2P repair in \cite{XORRepair}. %However, the efficiency of this scheme degrades severely if there is any error in the global state information.

All the existing schemes \cite{Sanigepalli2006,BOPPER,Cheung2006,CPR,DCPR,Legout2006,Fan2010,CDS,CDSextension,XORRepair} assume that the inter-user channels used in the second phase are lossless, so that some state information can be reliably exchanged before the starting of the P2P communications; otherwise, their performance may degrade severely if the state/control information is lost. In contrast, the proposed scheme in this paper is fully distributed, without requiring any state information exchange, and hence can be applied to networks with lossy links. Specifically, in our proposed scheme, each user estimates the ``usefulness" of sending a coded packet from a particular batch based on the number of packets  received during the first phase and its own transmission history during the second phase. In general, the more packets  received by a certain batch in phase 1, the more likely that a network coded packet generated from this batch is useful for its peers. Furthermore, for a given batch, the usefulness of its packets decreases with its transmissions. The usefulness matrices are generated distributively by each user at the end of phase 1, based on which a queue of batch ID with descending usefulness is created. When the user has a chance to transmit, a network coded packet generated from the front-most batch in the queue is broadcasted first. Phase 2 is complete when all the users are able to decode the file.

With a good BATS code, the user should be able to decode the file with a small overhead, e.g., a file containing $K$ packets should be decoded from $(1+\eta)K$ received packets with $\eta\ll 1$.  However, the performance of BATS code is largely dependent on a pre-defined degree distribution.  The optimal degree distribution can be obtained as a function of the channel rank distribution \cite{Yang13}. In wireless erasure networks with fixed topology, such as those considered in \cite{Yang11} and \cite{Dong2014}, the channel rank distribution can be obtained based on the erasure probability of each link. However, in P2P networks, the channel rank distribution is also affected by the communication protocol and the stopping time.  In this paper, we analyze the transmit efficiency of the proposed cooperative broadcast protocol and derive the resulting channel rank distribution at the instance of decoding, based on which a good BATS code is designed.

Simulation results show that the proposed two-phase protocol achieves highly reliable broadcasting with less number of transmissions, compared with the traditional single-phase transmissions and the existing cooperative broadcast schemes \cite{Fan2009,Fan2010}. Moreover, since a large number of transmissions are shifted from the source to the users, where less power is required per transmission, the proposed protocol is more power efficient. The performance of the proposed scheme is further validated experimentally with a 4-node testbed  based on the 802.11g Wi-Fi network, where the source and the receivers are connected in ad-hoc mode. It is found that the experimental results match very well with the analytical and simulation results. Furthermore, it is found that the BATS code overhead, designed based on the estimated channel rank distribution, is less than $1\%$, which is quite close to the optimal case with fixed channel rank distribution \cite{Yang13}.

The rest of this paper is organized as follows. Section~\ref{sec:model} introduces the system model. The proposed two-phase protocol is illustrated and analyzed in Section~\ref{sec:main}. The effectiveness and efficiency of the proposed protocol is evaluated in Section~\ref{sec:performance}. In Section~\ref{sec:experiment}, the testbed setup and the experimental results are presented. Finally, we conclude this paper in Section \ref{sec:conclusion}.

\emph{Notations}: Throughout this paper, random variables are represented by boldface upper-case letters and the probability of an event is denoted as $\Pr(\cdot)$. For a random variable $\mathbf X$, we use $\mathbb{E}[\mathbf X]$ to denote its expectation. Furthermore, scalars, vectors and matrices are represented by italic, boldface lower- and boldface upper-case letters, respectively. For a set $\mathcal{A}$, we use $|\mathcal{A}|$ to denote its cardinality and use $\mathcal{A}\setminus\mathcal{B}$ to denote set subtraction.

\section{System Model}\label{sec:model}
As shown in Fig.~\ref{F:model}, we consider a broadcasting scenario where a source node $s$  intends to send some common information to a group of $k$ users, which are closely located within a small region far away from the source node.    We assume that the obstruction and interference near the source may cause a common packet loss probability $p_0$ for all users. In addition, we further assume that the wireless link between the source to each user suffers from independent\footnote{The channels are assumed to be independent in rich scattering environment when the distance between any pair of users is larger than half of the wavelength.} and memoryless packet loss with probability $p_1$, and the links between the users have erasure probability $p_2$, where $0<p_0, p_1,p_2<1$.  Since the distance from the source to the user group is much larger than that between the users,  we assume that $p_2\leq p_1$.  A packet sent by the source node can be successfully received by each receiver with probability $(1-p_0)(1-p_1)$, and that sent by one of the users can be received by its peers with probability $1-p_2$.

\begin{figure}[htb]
\centering
\includegraphics[scale=0.8]{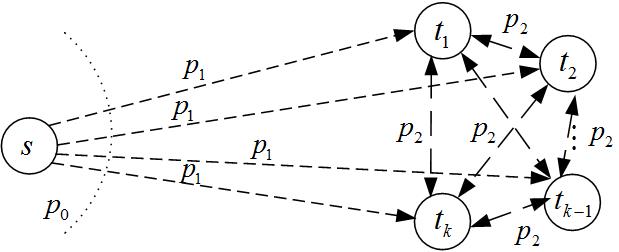}
\caption{Wireless broadcasting to a group of closely located users.}
\label{F:model}
\end{figure}

Intuitively, as the channel between the source and the user group is less reliable and much more power is required to compensate for the path loss over the long transmission distance from the source to the users, it is desirable  to minimize the number of transmissions by the source node by exploiting the more reliable P2P communication links between the users via user cooperation.

\section{Proposed Two-Phase Protocol with BATS Code}\label{sec:main}
In this section, we propose a two-phase transmission protocol based on BATS code and user cooperation to achieve reliable communication  for the scenarios shown in Fig.~\ref{F:model}. A BATS code consists of an outer code and an inner code, as shown in Fig.~\ref{F:Bats}. The outer code is an extension of the traditional fountain code to matrix form. Specifically, to apply BATS code, the source node first obtains a degree $d_i$ for $i$th batch by sampling a pre-designed degree distribution $\mathbf{\Psi}$, and then randomly picks $d_i$ distinct input packets to generate a batch of $M$ fountain-coded packets. The batches are then transmitted sequentially by the source. The inner code of BATS employs RLNC at the intermediate nodes, which corresponds to the users in Fig.~\ref{F:model}, and only packets within the same batch will be coded together. Hence, the network coding overhead is determined by the batch size $M$, which is usually  negligible compared with the packet length. Finally, the inner and outer codes are jointly decoded at the receiver using belief-propagation (BP) and inactivation decoding.

\begin{figure}[htb]
\centering
\includegraphics[scale=0.6]{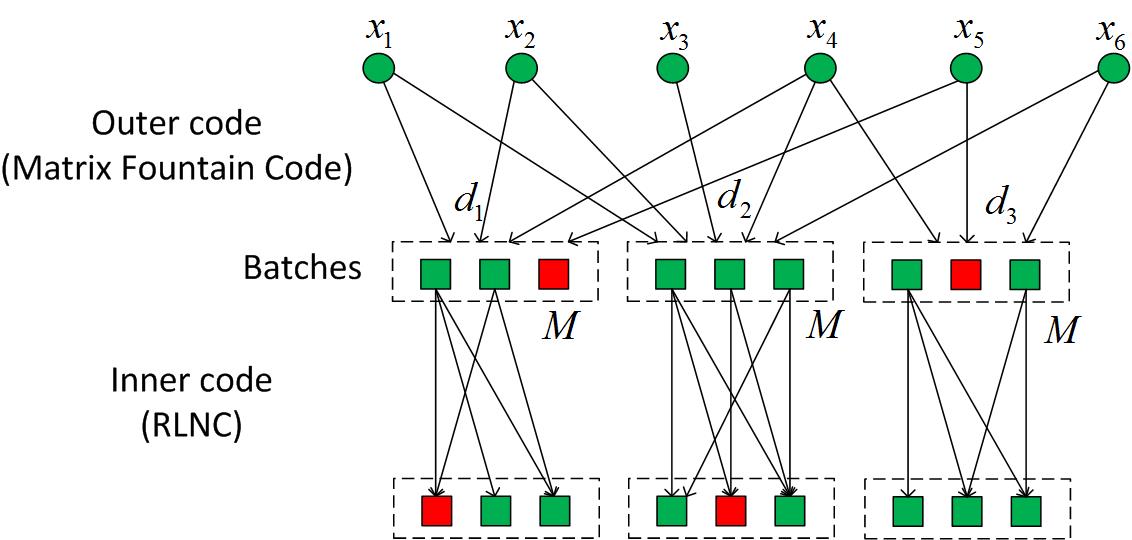}
\caption{Structure of BATS code. }
\label{F:Bats}
\end{figure}

Based on BATS code, we propose  the following two-phase transmission protocol:\\
\textbf{Phase 1}: The file is divided into $F$ packets, which are encoded with BATS code of batch size $M$ at the source node. The batches are broadcasted sequentially to the users until a stopping criterion is satisfied.\\
\textbf{Phase 2}: The users help each other by exchanging their respective received packets with i.i.d. erasures via temporally network-coded peer-to-peer (P2P) transmissions, until all the users can recover the file.

%By introducing phase 2 transmissions, the number of source transmissions can be significantly reduced, which is desirable when the source is far-away from the users and/or when the power/bandwidth at the source is expensive. Furthermore, when the inter-user links are more reliable than the communication links between the source and the users, the total number of transmissions can be reduced with the proposed two-phase protocol.

The efficiency of the BATS code is largely dependent on the degree distribution  $\mathbf{\Psi}$. In \cite{Yang13}, the optimal degree distribution is obtained by solving a linear optimization problem based on the file size and the channel rank distribution, which is assumed to be known before transmission. Since cooperative P2P repair will be used in the network of Fig.~\ref{F:model}, the channel rank distribution is affected not only by erasure probability, but also by scheduling algorithm. To design a BATS code for such network, we need to analyze the channel rank distribution observed by each user at the instance of decoding. In the following, the proposed protocol is discussed and analyzed based on two design objectives: i) minimizing the source transmissions; ii) minimizing the total number of retransmissions in both phases.

\subsection{Minimizing the source transmissions}\label{sec:phase1}
With optimal BATS code \cite{Yang13}, the file can be recovered if $(1+\eta)F$ packets are received, where $\eta\ll 1$ for moderate to large $F$. Denote by $\mathbf X$ the number of packets received by the user group.  To ensure that the user group can eventually recover the file via P2P transmissions with probability\footnote{$\varepsilon$ is a small positive number, which is set to $10^{-6}$ for the simulations in this paper.} no smaller than $(1-\varepsilon)$, we must have
\begin{align}
\Pr(\mathbf X\geq(1+\eta)F)\geq 1-\varepsilon. \label{eq:prob}
\end{align}

The probability that a packet is successfully received by at least one of the user equals to $(1-p_0)(1-p_1^k)$, or equivalently, the effective erasure probability is $\bar{p}\triangleq 1-(1-p_0)(1-p_1^k)$, where $k$ is the number of users in the group.  After $n$ batches (or equivalently $nM$ packets) have been sent by the source node, the number of received packets $\mathbf X$ is a random variable following binomial distribution $\mathcal{B}(nM,1-\bar{p})$. As $nM$ is usually  large, this binomial distribution can be approximated by the normal distribution $\mathcal{N}(\mu,\sigma^2)$ with $\mu=nM(1-\bar{p})$ and $\sigma^2=nM(1-\bar{p})\bar{p}$. Therefore, \eqref{eq:prob} can be approximated as
\begin{align}
Q\left(\frac{(1+\eta)F-\mu}{\sigma}\right)\geq 1-\varepsilon,\label{eq:thre}
\end{align}
where $Q(x)=\int_{x}^{\infty}\frac{1}{\sqrt{2\pi}}e^{-\frac{\tau^2}{2}}d\tau$ denotes the Gaussian $Q$ function. From \eqref{eq:thre},  the minimum value for $n$ can be obtained as
\begin{align}
n&\geq\frac{F'}{M(1-\bar{p})}-\frac{\alpha}{2M(1-\bar{p})}\left[\sqrt{4\bar{p}F'+\alpha^2\bar{p}^2}-\alpha \bar{p}\right]\nonumber\\
&\approx \frac{F'}{M(1-\bar{p})}-\frac{\alpha\sqrt{4\bar{p}F'}}{2M(1-\bar{p})},\label{eq:getn}
\end{align}
where $F'=(1+\eta)F$ and $\alpha=Q^{-1}(1-\varepsilon)$. The approximation in \eqref{eq:getn} is valid since $\alpha\bar{p}\ll F'$.  As a result, the minimum number of batches sent by the source is
\begin{align}
n_l=\left\lceil\frac{2F'-\alpha\sqrt{4\bar{p}F'}}{2M(1-\bar{p})}\right\rceil. \label{eq:nValue}
\end{align}

To minimize the source transmission, the source node stops transmission when the number of batches $n$ reaches the threshold $n_l$ given in \eqref{eq:nValue}. Since the expected number of packets received by each user is only $n_lM(1-p_0)(1-p_1)$, which is smaller than $(1+\eta)F$, the file is not yet recoverable by each individual user. In the second phase, the users help each other by broadcasting temporally network-coded packets generated from their respective received packets to ensure that all users can successfully recover the file eventually.

%The P2P communication in the second phase is discussed for two cases, distinguished by whether a user can learn the status of the rest $k-1$ users at the end of phase 1.

%\subsection{phase 2 Transmission}\label{sec:blind}
Since all the users are geographically separated, they have no knowledge on  what packets have been received by others during phase 1. Therefore, in phase 2, it is critical for each user to independently determine which packets it should send based on its own received packets.

Denote by $\mathcal{N}_i^j$ the set of packets received by user $t_j$ with batch index $i$, where $i\in\{1,...,n\},j\in\{1,...,k\}$.  Consider two typical users $t_j$ and $t_{j'}$.  With random linear network coding, a coded packet generated from $t_j$ for batch $i$ is useful for user $t_{j'}$ if $\mathcal{N}_i^j\setminus \mathcal{N}_i^{j'}\neq\emptyset$. Furthermore, if $|\mathcal{N}_i^j\setminus \mathcal{N}_i^{j'}|=m$, $m$ useful packets for batch $i$ can be generated from $t_j$ for $t_{j'}$. Without knowing $\mathcal{N}_i^{j'}$, user $t_j$ can estimate the value of $|\mathcal{N}_i^j\setminus \mathcal{N}_i^{j'}|$ based on its own received packets $\mathcal{N}_i^j$ as follows. Specifically,  $|\mathcal{N}_i^j\setminus \mathcal{N}_i^{j'}|=m$ if $m$ out of $|\mathcal{N}_i^j|$ received packets at user $t_j$ are erased at user $t_{j'}$, i.e.,
\begin{align}
&\Pr\left(|\mathcal{N}_i^j\setminus \mathcal{N}_i^{j'}|=m \Big |\mathcal{N}_i^j\right)\nonumber\\
&\quad =
\begin{cases}
{|\mathcal{N}_i^j|\choose m}p_1^m(1-p_1)^{(|\mathcal{N}_i^j|-m)}, & m\leq |\mathcal{N}_i^j|\\
0, & |\mathcal{N}_i^j|<m\leq M
\end{cases}\label{eq:probm}
\end{align}

For notational convenience, we will represent above conditional probability  as $\Pr(m|\mathcal{N}_i^j)$.  Assume that user $t_j$ has already sent out $u$ packets generated from batch $i$. Then, the $(u+1)$th packet generated from the same batch is still useful for user $t_{j'}$ if either  $m\geq u+1$ or  at most $(m-1)$ out of $u$ packets are received by $t_{j'}$. Denote this event by $E_i^u(j)$, its probability of occurrence be estimated as
{\small
\begin{align}
&\Pr(E_i^u(j)|\mathcal{N}_i^j)\nonumber\\
&=\sum_{m=u+1}^{M}\Pr(m|\mathcal{N}_i^j)+\sum_{m=1}^{u}\Pr(m|\mathcal{N}_i^j)\sum_{l=0}^{m-1}{u\choose l}(1-p_2)^lp_2^{(u-l)}.\label{eq:estimation}
\end{align}}

By symmetry, \eqref{eq:estimation}  applies for all the peers of user $t_j$. Hence, $\Pr(E_i^u(j)|\mathcal{N}_i^j)$ can be used as a metric to measure the usefulness for user $t_j$ to broadcast the $(u+1)$th packet generated from batch $i$.   In general, \eqref{eq:estimation} is valid for any  $u\geq 0$. However, a user usually will not send more than $M$ packets from the same batch before decoding. Thus we can  calculate the estimation only up to $u=M$. Let  $\mathbf{S}_j\in\mathbb{R}^{M\times n}$ be such  ``usefulness" matrix for user $t_j$, with the $(u,i)$th element equal to $\Pr(E_i^u(j)|\mathcal{N}_i^j)$. As more phase 2 packets are sent out from a batch,  new transmission is less likely to be useful. Hence, each column of $\mathbf{S}_j$ is a monotonically decreasing vector, i.e., $\mathbf{S}_j(u,i)>\mathbf{S}_j(u+1,i)$. Furthermore, if more packets are received for batch $i$ than batch $i'$ at the end of phase 1,  the packet generated from batch $i$ is more likely to be useful than that from batch $i'$, expressed mathematically, if we have $|\mathcal{N}_{i}^{j}|>|\mathcal{N}_{i'}^{j}|$, then $\mathbf{S}_{j}(u,i)>\mathbf{S}_j(u,i'), \forall u\geq 0$.

To maximize the spectral efficiency, a packet that is expected to be more useful should be transmitted with higher priority.  To obtain the optimal transmission order, user $t_j$ sorts all the elements in $\mathbf{S}_j$ in descending order. Denote the ordered elements by a vector $\mathbf{s}_j\in\mathbb{R}^{1\times Mn}$ and  the column index of the ordered elements by a vector $\mathbf v_j\in\mathbb{Z}^{1\times Mn}$. Then each element of $\mathbf{v}_j$ represents a batch ID within $\{1,...,n\}$. User $t_j$ sequentially transmits the temporally network-coded packets with batch ID obtained from $\mathbf v_j$.

\begin{example}
For  illustration purpose, we consider a simple setup with $p_0=0$, $p_1=0.5$, $p_2=0.1$, $M=4$ and $n=5$. If user $t_j$ receives $\{2,1,3,4,2\}$ packets at the end of phase 1 for batch $1$ to $5$, respectively, the usefulness matrix calculated based on \eqref{eq:estimation} is
\begin{align}
\mathbf{S}_j=\left[
\begin{matrix}
0.7500   & 0.5000  &  0.8750 &   0.9375  &  0.7500\\
0.3000   & 0.0500  &  0.5375  &  0.7125  &  0.3000\\
0.0525  &  0.0050  &  0.2000   & 0.3862 &   0.0525\\
0.0075  &  0.0005  &  0.0448  &  0.1410  &  0.0075\\
\end{matrix}
\right].
\end{align}
By sorting all the elements in $\mathbf{S}_j$, user $t_j$ obtains $\mathbf{s}_j=\left[\begin{matrix}0.9375 & 0.8750 & 0.7500 & 0.7500 & 0.7125 & \cdots \end{matrix}\right]$. The column indices of the elements in $\mathbf{s}_j$ give the transmission order as $\mathbf{v}_j=\left[\begin{matrix}4&3 &1 &5 &4 &3 &\cdots\end{matrix}\right]$. At the first time when user $t_j$ can access the channel, it will send a coded packet generated from all the available packets in its buffer for batch 4, received in both phases.  For the second transmission,  a coded packet from batch 3 will be sent out, and the process continues.
\end{example}

%Ideally, user $t_j$ should update $\mathbf{S}_j$ when it receives any packet from its peers.
\subsubsection{Estimating the Total Number of Transmissions in Phase 2}
We assume that all the users have equal probability for transmission under a multiple-access scheme, such as TDMA or CSMA/CA. The transmission is complete when all the users are able to recover the file, which is assumed to be true after $T$ transmissions in total. On average, user $t_j$ will send out ${T}/{k}$ coded packets with batch IDs given by the first ${T}/{k}$  elements of $\mathbf v_j$. Among all the $T/k$ packets sent out by $t_j$, only $(1-p_2)T/k$ will reach user $t_{j'}$ due to packet erasures. Hence, the number of packets received by user $t_{j'}$ from its $(k-1)$ peers is a function of $T$, which is
\begin{align}
P(T)=\frac{(1-p_2)(k-1)T}{k}.\label{eq:PT}
 \end{align}

 By symmetry, we may assume that the batch IDs of the received packets follow a uniform distribution. In other words, if we denote by $\mathbf Y_2$ the total number of packets received for a typical batch in phase 2, then $\mathbf{Y}_2$ is a random number following binomial distribution $\mathcal{B}(P(T),{1}/{n})$, i.e.,
\begin{align}
&\Pr(\mathbf{Y}_2=i)={P(T)\choose i}\left(\frac{1}{n}\right)^i\left(1-\frac{1}{n}\right)^{P(T)-i},\quad i=0,...,P(T). \label{eq:PY2}
\end{align}

Denote by $\mathbf{Y}_1$ and $\mathbf{Z}$ the number of packets for a typical batch received by a single user and the  user group in phase 1, respectively. Clearly, we have $\mathbf{Y}_1\sim\mathcal{B}(M,(1-p_0)(1-p_1))$, i.e.,
\begin{align}
\Pr(\mathbf{Y}_1=i)={M\choose i}[(1-p_0)(1-p_1)]^{i}(p_0+p_1-p_0p_1)^{(M-i)}, i=0,...,M.\label{eq:PY1}
\end{align}
Furthermore, $\mathbf Z$ is a random variable which satisfies $\mathbf{Z}\geq\mathbf{Y}_1$, since the union of all sets is always larger than any single set. The distribution of $\mathbf Z$ given $\mathbf{Y}_1$ can be obtained as
\begin{align}
&\Pr(\mathbf{Z}=j|\mathbf Y_1=i)={{M-i}\choose {j-i}}(1-p_1^{k-1})^{j-i}p_1^{(k-1)(M-j)}, \forall j\geq i, i=0,...,M\label{eq:PZgY1}.
\end{align}

Since only $\mathbf Z$ packets are available for the whole user group, the number of useful packets available at any user cannot be larger than $\mathbf{Z}$. Any more packets received will be a linear combination of the existing $\mathbf{Z}$ packets. Out of these $\mathbf Z$ packets, the user already has $\mathbf{Y}_1$ packets  received during phase 1. Hence,  anything more than $(\mathbf{Z}-\mathbf{Y}_1)$ packets received during phase 2 will be redundant.  Furthermore, since random linear network coding over a sufficiently large field size is applied, we assume that any packet received before reaching its limit $\mathbf Z$ is innovative.
\begin{lemma}\label{lem:Delta}
Let $\mathbf{\Delta}=\mathbf{Z}-\mathbf{Y}_1$, we have  $\mathbf{\Delta}\sim\mathcal{B}(M,\tilde{p})$, where $\tilde{p}=(1-p_1^{k-1})(p_0+p_1-p_0p_1)$.
\end{lemma}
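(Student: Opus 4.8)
The plan is to compute the distribution of $\mathbf{\Delta} = \mathbf{Z} - \mathbf{Y}_1$ by conditioning on $\mathbf{Y}_1$ and using the explicit conditional law of $\mathbf{Z}$ given in \eqref{eq:PZgY1}. First I would observe that, given $\mathbf{Y}_1 = i$, equation \eqref{eq:PZgY1} says $\mathbf{Z} - i$ is binomial $\mathcal{B}(M-i, 1-p_1^{k-1})$: indeed, among the $M - i$ packets of the batch that user $t_j$ did \emph{not} receive in phase~1, each is received by at least one of the remaining $k-1$ users independently with probability $1 - p_1^{k-1}$ (accounting for the common loss $p_0$ as well; note the factor in \eqref{eq:PZgY1} should be read with the effective per-user success probability, i.e.\ $1 - (1-p_0)(1-p_1)$ raised to the $(k-1)$ for a total miss, giving $1 - (p_0 + p_1 - p_0 p_1)^{k-1}$ — I would reconcile the exact form with the notation used earlier). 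The key structural fact is that, conditioned on $\mathbf{Y}_1 = i$, the random variable $\mathbf{\Delta} = \mathbf{Z} - \mathbf{Y}_1$ depends on $i$ only through the number of trials $M - i$.

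The main step is then a convolution/thinning argument. Write
\begin{align}
\Pr(\mathbf{\Delta} = d) = \sum_{i=0}^{M} \Pr(\mathbf{Y}_1 = i)\, \Pr(\mathbf{Z} - i = d \mid \mathbf{Y}_1 = i),
\end{align}
substitute the binomial forms of $\Pr(\mathbf{Y}_1 = i)$ from \eqref{eq:PY1} and of the conditional law, and recognize the sum as a standard combinatorial identity. The cleanest way to see the answer is to think of each of the $M$ packets of the batch independently: a packet contributes to $\mathbf{\Delta}$ exactly when it is \emph{missed} by $t_j$ in phase~1 (probability $p_0 + p_1 - p_0 p_1$) \emph{and} received by at least one of the other $k-1$ users (probability $1 - p_1^{k-1}$), and these two events are independent across the source-to-$t_j$ link versus the source-to-peers links. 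Hence each packet contributes to $\mathbf{\Delta}$ independently with probability $\tilde p = (1 - p_1^{k-1})(p_0 + p_1 - p_0 p_1)$, which gives $\mathbf{\Delta} \sim \mathcal{B}(M, \tilde p)$ immediately. The algebraic route (Vandermonde-type summation over $i$) yields the same $\binom{M}{d}\tilde p^{\,d}(1-\tilde p)^{M-d}$ and I would present that as the formal verification.

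The one subtlety — and the step I expect to need the most care — is the claim that the "miss by $t_j$" event and the "received by some peer" event are independent, which is what lets the per-packet probabilities multiply. This is exactly the content baked into the conditional law \eqref{eq:PZgY1}: under the model's independence assumptions on the source-to-user links (and the common loss $p_0$ being applied identically), conditioning on $t_j$'s phase-1 outcome does not bias the peers' outcomes on the same packet beyond fixing which packets are "candidates" for $\mathbf{\Delta}$. I would make this explicit by noting that the joint reception pattern of a single packet at the $k$ users factors as the product of $k$ (nearly) independent Bernoulli events after conditioning on the common-loss event, so summing out $\mathbf{Y}_1$ is just collecting the $t_j$-coordinate. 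Once independence is pinned down, the rest is the routine binomial convolution, and the lemma follows.
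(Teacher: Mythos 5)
Your proposal is correct and reaches the same conclusion, but by a genuinely different primary route. The paper's proof is the purely algebraic one you relegate to ``formal verification'': it writes $\Pr(\mathbf{\Delta}=\delta)$ as the convolution of \eqref{eq:PY1} with \eqref{eq:PZgY1}, applies the identity $\binom{M-i}{\delta}\binom{M}{i}=\binom{M}{\delta}\binom{M-\delta}{i}$, and normalizes the remaining sum into a binomial distribution that sums to one. Your main argument is instead the two-stage binomial thinning: $M-\mathbf{Y}_1\sim\mathcal{B}(M,\hat{p})$ with $\hat{p}=p_0+p_1-p_0p_1$, and given $M-\mathbf{Y}_1=M-i$, equation \eqref{eq:PZgY1} says $\mathbf{\Delta}\sim\mathcal{B}(M-i,1-p_1^{k-1})$, so the composition is $\mathcal{B}(M,\hat{p}(1-p_1^{k-1}))=\mathcal{B}(M,\tilde{p})$. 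This is shorter and more transparent than the paper's summation, and it makes clear \emph{why} the answer is binomial rather than just verifying that it is.

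One caution on the ``subtlety'' you flag: your physical independence story is in fact not exactly right, and your instinct to ``reconcile the exact form'' is warranted. Because the common loss $p_0$ affects all $k$ users simultaneously, the event that $t_j$ misses a packet and the event that some peer receives it are \emph{not} independent; the exact conditional probability that all $k-1$ peers also miss a packet, given that $t_j$ missed it, is $\bigl(p_0+(1-p_0)p_1^{k}\bigr)/(p_0+p_1-p_0p_1)$, not $p_1^{k-1}$. However, the paper's own \eqref{eq:PZgY1} already builds in the simplification that the peers' misses are governed by $p_1^{k-1}$ conditionally on $\mathbf{Y}_1$, so as a derivation of the lemma \emph{from} \eqref{eq:PY1} and \eqref{eq:PZgY1} your argument is valid and matches the paper's result; the approximation lives in \eqref{eq:PZgY1}, not in your thinning step. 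If you present the per-packet heuristic, you should attribute the multiplication of $\hat{p}$ and $1-p_1^{k-1}$ to the stated conditional law rather than to a genuine independence of the two physical events.
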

\begin{proof}
Please refer to Appendix~\ref{A:proofDelta}
\end{proof}

Based on Lemma~\ref{lem:Delta}, the expected number of redundant packets for all batches received during phase 2, denoted by $R(T)$,  can be estimated as:
\begin{align}
R(T)=n\sum_{l=\delta}^{P(T)}\sum_{\delta=0}^{M}(l-\delta)\Pr(\mathbf{Y}_2=l,\mathbf{\Delta}=\delta)\label{eq:redundancy}.
\end{align}
For simplicity, we assume that  $\mathbf{Y}_2$ and $\mathbf{\Delta}$ are \emph{independent} binomial random variables. Hence, $(\mathbf{Y}_2-\mathbf{\Delta})$ can be approximated as a Gaussian random variable distributed according to $\mathcal{N}(\mu_R,\sigma_R^2)$, where $\mu_R=\frac{P(T)}{n}-M\tilde{p}$ and $\sigma_R^2=\frac{P(T)}{n}\left(1-\frac{1}{n}\right)+M\tilde{p}(1-\tilde{p})$. Hence, \eqref{eq:redundancy} can be viewed as the positive expectation of a Gaussian variable, which can be computed as
\begin{align}
R(T)&=n\int_{0}^{\infty}\frac{x}{\sqrt{2\pi\sigma_R^2}}\exp\left(-\frac{(x-\mu_R)^2}{2\sigma_R}\right)dx\nonumber\\
&=n\sqrt{\frac{\sigma_R^2}{2\pi}}\exp\left(-\frac{\mu_R^2}{2\sigma_R^2}\right)+\mu_RnQ\left(-\frac{\mu_R}{\sigma_R}\right).\label{eq:redunExplicit}
\end{align}

Denote by $\mathbf{D}_j$ the number of innovative packets received by user $t_j$ at the end of phase 2, $j=1,...,k$. For simplicity, we assume $\mathbf{D}_j\sim\mathcal{N}(\mu_D,\sigma_D^2)$. Following similar analysis given above, we have
\begin{align}
\mu_D=(1-p_0)(1-p_1)nM+P(T)-R(T),\label{eq:muD}
\end{align}
where $P(T)$ is given in \eqref{eq:PT} and $R(T)$ is given in \eqref{eq:redunExplicit}. Since $R(T)$ is usually much smaller than $P(T)$ and the number of transmissions during phase 1, it can be ignored when computing $\sigma_D^2$. Hence, we have
\begin{align}
\sigma_D^2\approx nM(1-p_0)(1-p_1)(p_0+p_1-p_0p_1)+\frac{T(k-1)}{k}(1-p_2)p_2.\label{eq:varianceD}
\end{align}

Phase 2 is complete when the last user is able to decode the file, i.e.,
\begin{align}
\min\{\mathbf{D}_1,...,\mathbf{D}_k\}\geq (1+\eta)F. \label{eq:condition1}
\end{align}

Based on the approximation in \cite{Royston82}, \eqref{eq:condition1} can be explicitly expressed as
\begin{align}
\mu_D+\sigma_D\Phi^{-1}\left(\frac{0.625}{k+0.25}\right)\geq (1+\eta)F, \label{eq:determineT}
\end{align}
where $\Phi(\cdot)$ is the cumulative distribution function (cdf) of the standard normal distribution $\mathcal{N}(0,1)$; $\mu_D$ and $\sigma_D$ are functions of $T$, as given in \eqref{eq:muD} and \eqref{eq:varianceD}, respectively. Hence, we can obtain the stopping time $T$ from \eqref{eq:determineT}. Intuitively, the more packets transmitted, the more innovative packets will be received. Hence, the left-hand-side of \eqref{eq:determineT} is monotonically increasing with $T$; thus \eqref{eq:determineT} has a unique solution. The numerical solution of  \eqref{eq:determineT} can be efficiently obtained by bisection method, as shown in Algorithm~\ref{alg1}.
%=============================Algorithm==========================
\begin{algorithm}
\caption{$T(n,M,k,F',p_0,p_1,p_2)$}
\label{alg1}
\begin{algorithmic}
\STATE{\textbf{Initialize:}}
\STATE{$\tilde{p}=(1-p_1^{k-1})(p_0+p_1-p_0p_1);\beta=\Phi^{-1}\left(\frac{0.625}{k+0.25}\right)$}
\STATE{$T_l=0$}
\STATE{$T_u=nM$}
\STATE{$f_l=\mu_D(T_l)+\sigma_D(T_l)\beta-F'$}
\STATE{$f_u=\mu_D(T_u)+\sigma_D(T_u)\beta-F'$}
\WHILE{$f_u-f_l>1$}
\STATE{$T=\frac{T_l+T_u}{2}$}
\STATE{$f=\mu_D(T)+\sigma_D(T)\beta-F'$}
\IF{$f>0$}
\STATE{$T_u=T$;$f_u=f$}
\ELSE
\STATE{$T_l=T$;$f_l=f$}
\ENDIF
\ENDWHILE
\STATE{$T=T_u$}
\end{algorithmic}
\end{algorithm}
%==================================================================

\subsubsection{Estimating the Rank Distribution}
In conventional directed \emph{acyclic} networks considered in \cite{Yang11,Yang13}, the rank distribution of the batches is determined by the network topology and the erasure probability of each link. However, for the network shown in Fig.~\ref{F:model} with cycles, the rank distribution is affected by the scheduling scheme in phase 2 transmission.  Since rank distribution is an important parameter for designing BATS code, it is crucial to get a good estimation for it before transmission, which is pursued in this subsection.

%With $T$ obtained from the preceding subsection, we are now ready to estimate the rank distribution.
Following similar assumptions as in \eqref{eq:redundancy}, the rank of a typical batch for a user is $r$ if either of the following two events occur: i) the user group has more than $r$ packets for this batch, but the user only receives $r$, i.e., $\mathbf Z>r$ and $\mathbf{Y}_1+\mathbf{Y}_2=r$; ii) the user receives more than $r$ packets, but only $r$ out of them are innovative, i.e., $\mathbf{Y}_1+\mathbf{Y}_2\geq r$ and $\mathbf{Z}=r$.  Hence, at the end of the transmissions, the probability that a batch has rank $r, r\in\{0,1,...,M\}$, for a user is given by
\begin{align}
&\Pr(r)=\Pr(\mathbf{Z}>r,\mathbf Y_1+\mathbf Y_2=r)+\Pr(\mathbf{Z}=r,\mathbf Y_1+\mathbf Y_2\geq r)\nonumber\\
&=\sum_{i=0}^{r}\sum_{j=r+1}^{M}\Pr(\mathbf Z=j|\mathbf Y_1=i)\Pr(\mathbf Y_1=i)\Pr(\mathbf{Y}_2=r-i)\nonumber\\
&\quad +\sum_{i=0}^{r}\sum_{j=r-i}^{P}\Pr\left(\mathbf Z=r|\mathbf{Y}_1=i\right)\Pr(\mathbf Y_1=i)\Pr(\mathbf Y_2=j),\label{eq:Rank}
\end{align}
where each individual probability $\Pr(\mathbf{Y}_2)$, $\Pr(\mathbf{Y}_1)$ and $\Pr(\mathbf Z|\mathbf Y_1)$ are given in \eqref{eq:PY2},\eqref{eq:PY1} and \eqref{eq:PZgY1}, respectively, with $T$ obtained from Algorithm~\ref{alg1}.

Since the number of source transmissions is set to the minimum, the decoding usually occurs only when $T$ is sufficiently large. In this case, we may assume that $\mathbf{Y}_1+\mathbf{Y}_2>\mathbf{Z}$ and $\Pr(r)\approx\Pr(\mathbf{Z}=r)$, i.e.,
\begin{align}
\Pr(r)={M\choose r}(1-\bar{p})^{r}\bar{p}^{M-r},\label{eq:rankApprox}
\end{align}
where $\bar{p}=p_0+p_1^{k}-p_0p_1^{k}$.

\begin{example}
Assume that a file containing 1600 packets is to be transmitted from the source to three users through the network shown in Fig.~\ref{F:model} with $p_0=0.05$, $p_1=0.5$ and $p_2=0.1$. A batch code with batch size $M=16$ is used to correct the erasures. The batch overhead is assumed to be $1\%$ and hence the minimum number of batches sent by the source is computed from \eqref{eq:nValue} to be 129. The analytical rank distribution is plotted together with the simulated rank distribution for the three users in Fig.~\ref{F:RankDist}. It is observed that the analytical rank distribution given in \eqref{eq:Rank} matches quite well with the simulation results. Furthermore, the approximated rank distribution given in \eqref{eq:rankApprox} is also of sufficient accuracy.  Hence, we can design good BATS code based on the rank distribution given in \eqref{eq:Rank} or \eqref{eq:rankApprox}.
\begin{figure}[htb]
\centering
\includegraphics[scale=0.6]{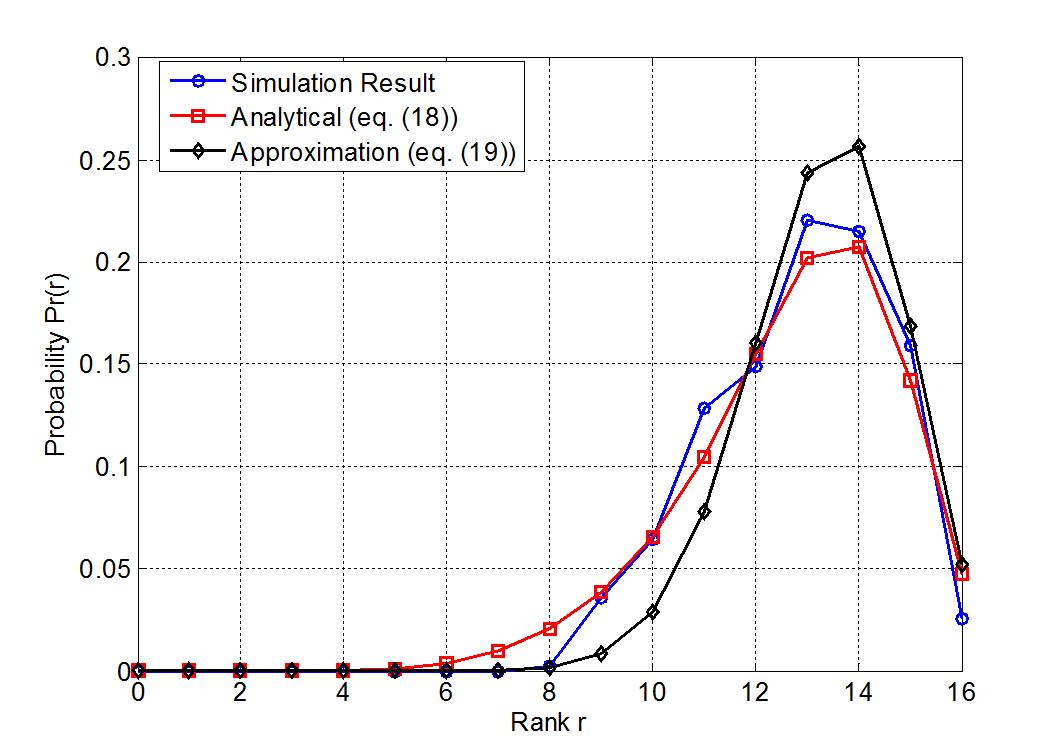}
\caption{Comparison between analytical and simulated rank distributions. }
\label{F:RankDist}
\end{figure}
\end{example}

\subsection{Minimizing the total number of transmissions}\label{sec:nopt}
It is observed that if the source sends a few more batches than the threshold given in \eqref{eq:nValue}, the total number of transmissions in both phases may be significantly reduced. The optimal number of batches to be transmitted by the source, denoted as $n^*$, can be found by solving the following optimization problem
\begin{equation}\label{eq:optimization}
\begin{aligned}
n^*=&\arg\underset{n}{\min} (T+nM)\\
&\textnormal{ subject to: \eqref{eq:redunExplicit}-\eqref{eq:varianceD},\eqref{eq:determineT}}\\
\end{aligned}
\end{equation}

%\begin{equation}
%\begin{aligned}
%n^*=&\arg\min T+nM\\
%&\textnormal{ subject to: }\\
%&(1-p_0)(1-p_1)nM+P(T)-R(T)=(1+\eta)F\\
%&R(T)=n\sqrt{\frac{\sigma_R^2}{2\pi}}\exp\left(-\frac{\mu_R^2}{2\sigma_R^2}\right)+\mu_RnQ\left(-\frac{\mu_R}{\sigma_R}\right)
%&
%\end{aligned}
%\end{equation}
Since the constraints in \eqref{eq:optimization} is non-convex, a closed-form solution does not exist in general. We propose to solve for $n^*$ by exhaustively searching all possible values of $n$ since the search space is not large, as explained next.  First, the minimum value of $n$, denoted by $n_l$, is given by \eqref{eq:nValue}. Furthermore, the maximum value of $n$, denoted by $n_u$, is the number of batches sent out by the source when all the users are able to decode the file without requiring phase 2 transmissions. To find $n_u$, we further denote by $\mathbf{X}_j$ the number of packets received by user $t_j$, for $j=1,...,k$, after the source sent out $n_u$ batches, i.e., $Mn_u$ packets.  $\mathbf{X}_j$ are independent and identically  distributed random variables according to $\mathcal{B}(Mn_u,(1-p_0)(1-p_1))$.  The source transmission stops when all the users are able to recover the file, i.e., when
\begin{align}
\min_{j\in\{1,...,k\}}\{\mathbf{X}_j\}\geq (1+\eta)F. \label{eq:condition}
\end{align}
Since $Mn_u\gg 1$, the binomial distribution can be approximated by the normal distribution $\mathcal{N}(\bar{\mu},\bar{\sigma}^2)$, where $\bar{\mu}=Mn_u(1-p_0)(1-p_1)$ and $\bar{\sigma}^2=Mn_u(p_0+p_1-p_0p_1)(1-p_0)(1-p_1)$. The statistical mean of $\min\{\mathbf{X}_j\}$ can be approximated as \cite{Royston82}
\begin{align}
\mathbb{E}\left[\min_{j\in\{1,...,k\}}\{\mathbf{X}_j\}\right]\approx \bar{\mu}+\bar{\sigma}\Phi^{-1}\left(\frac{0.625}{k+0.25}\right), \label{eq:approx}
\end{align}
 By substituting \eqref{eq:approx} into \eqref{eq:condition}, we can solve for $n_u$ as
\begin{align}
n_u=\left\lceil\frac{2F'+\hat{p}\beta^2+\sqrt{4\hat{p}\beta^2F'+\hat{p}\beta^4}}{2M(1-\hat{p})}\right\rceil, \label{eq:nmax}
\end{align}
where $F'=(1+\eta)F$, $\hat{p}=p_0+p_1-p_0p_1$ and $\beta=\Phi^{-1}\left(\frac{0.625}{k+0.25}\right)$.

For all the integer values within $[n_l,n_u]$, the corresponding phase 2 transmissions $T$ can be found from Algorithm~\ref{alg1}, and hence the total number of transmissions $nM+T$ can be computed. The value of $n$ that leads to the minimum number of transmissions is returned as $n^*$.

\begin{example}\label{ex:5user}
If a file containing $5000$ packets is to be sent to a group of $k=5$ receivers through the network shown in Fig.~\ref{F:model}, where $p_0=0.05$, $p_1=0.5$, $p_2=0.1$. Assume that a BATS code with batch size $M=16$ is used and the degree distribution is well designed such that the coding overhead is maintained within $1\%$. From \eqref{eq:nValue} and \eqref{eq:nmax}, the minimum and maximum value of $n$ can be computed to be $351$ and $673$, respectively. For all the integers in between, we can find the corresponding total number of transmissions required, which is plotted in Fig.~\ref{F:nopt}, and the the optimal number of batches is $n^*=402$.
%For every additional batch, $16$ packets are transmitted. Hence, the plot of the total number of transmissions versus the number of batches is a jagged line.
\begin{figure}[htb]
\centering
\includegraphics[scale=0.6]{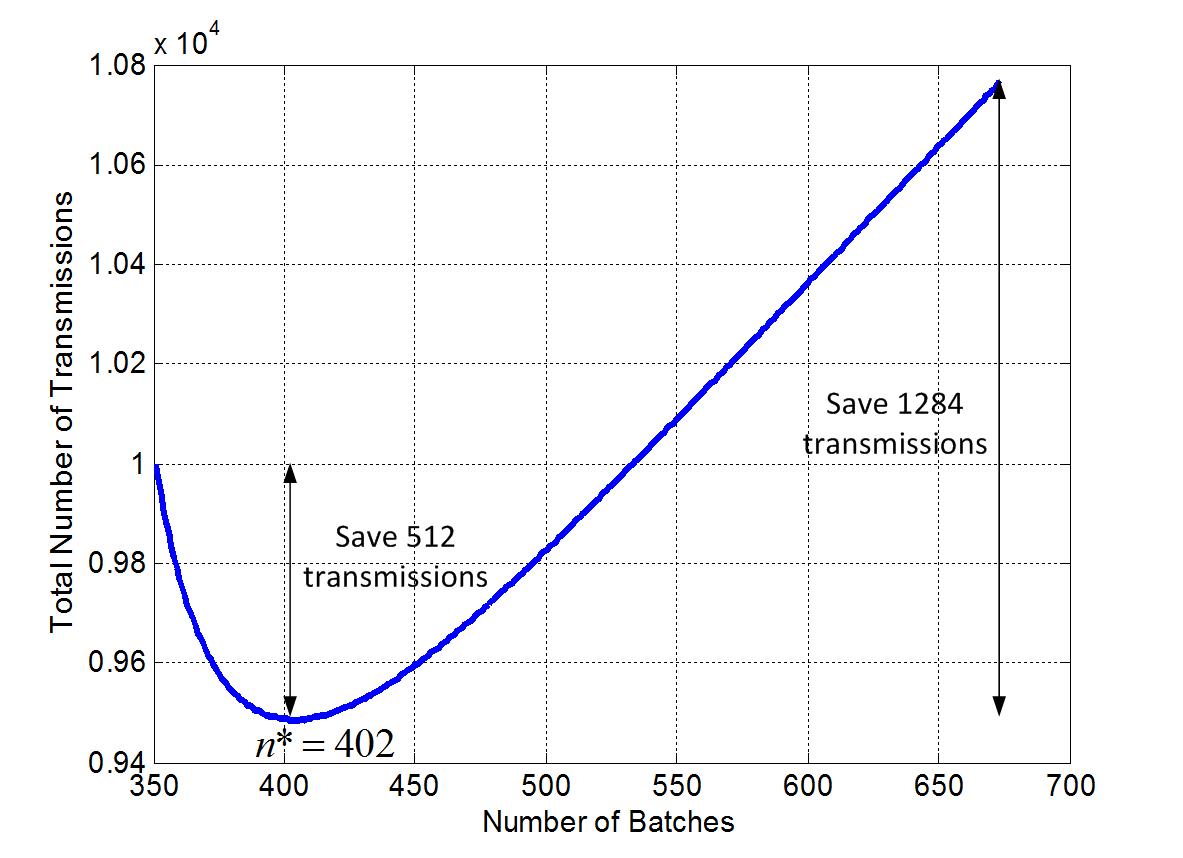}
\caption{Total number of transmissions (phase 1 and 2) versus number of batches.}
\label{F:nopt}
\end{figure}

Compared with the single phase transmissions where the number of batches sent by the source is set to the maximum value $n_u$, the proposed cooperative broadcast scheme with $n^*$ batches saves 1284 transmissions. Furthermore, compared with the one targeting at minimum source transmissions, i.e., with $n=351$, 512 transmissions are saved by making the source transmit 51 more batches with $n^*=402$.
\end{example}

When the number of batches sent is larger than the minimum value, the users do not have to receive all the innovative packets from its peers before decoding the file. Hence, the resulting channel rank distribution will not follow that given in \eqref{eq:rankApprox}. However, the estimated distribution given in \eqref{eq:Rank} still holds. Hence, we can find the optimal degree distribution for the BATS code accordingly.

\begin{example}
Consider the same network settings as that in Example~\ref{ex:5user}. The analytic channel rank distribution given in \eqref{eq:Rank} is compared with simulated distribution in Fig~\ref{F:rank5user}. The simulated rank distribution is averaged over the 5 users at the moment when all the users can recover the file. Eventually, the estimated distribution matches well with the simulated one. The slight discrepancy for high-rank batches is mainly due to the assumption that all batches are sent with equal probability in phase 2. The experimental results in Section~\ref{sec:experiment} show that the performance degradation of the BATS code due to this small error of channel estimation is negligible.
\begin{figure}[htb]
\centering
\includegraphics[scale=0.6]{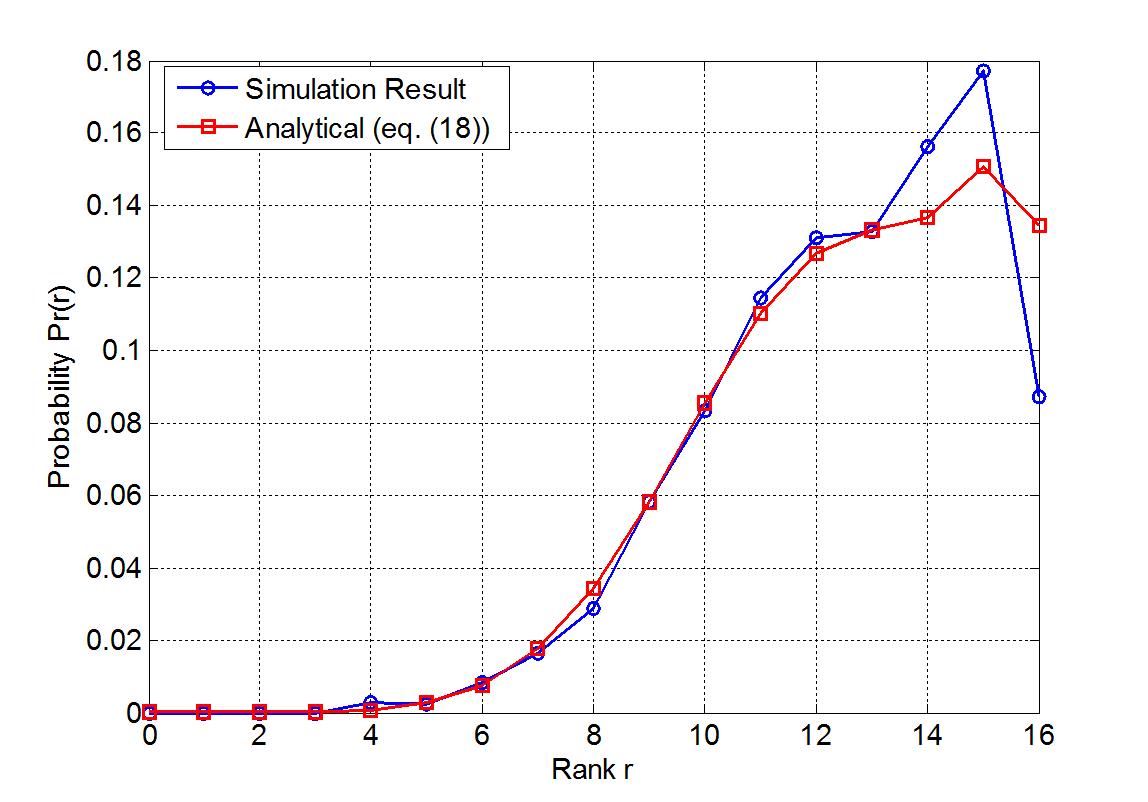}
\caption{Comparison between the analytical and simulated channel rank distributions.}
\label{F:rank5user}
\end{figure}
\end{example}

\section{Performance Evaluation}\label{sec:performance}
This section is devoted to evaluating the effectiveness, efficiency, and robustness of the proposed 2-phase cooperative broadcasting schemes. First, the transmission efficiency of the proposed 2-phase scheme with optimal number of batches is compared with the coded single-phase transmissions and the cooperative P2P information exchange (PIE) introduced in \cite{Fan2009,Fan2010}. Then, the computational overhead of the proposed scheme is analyzed. Finally, we also investigate the robustness of the proposed scheme against unknown number of users.

\subsection{Transmission Efficiency}
\subsubsection{Comparison with Single-Phase Transmission}
First, we compare the proposed two-phase broadcasting scheme with the traditional single-phase transmission, where the source keeps transmitting the packets until all the users can recover the file. Assume that optimal erasure code, such as Raptor code \cite{Shokrollahi11}, has been applied so that a user can recover the file after receiving $(1+\eta)F$ packets.

%Denote by $N$ the minimum number of packets needs to be transmitted by the source so that the file can be recovered by all the users. Follow similar analysis as that for \eqref{eq:nmax}, we have
%\begin{align}
%N=\left\lceil\frac{2F'+\hat{p}\beta^2+\sqrt{4\hat{p}\beta^2F'+\hat{p}\beta^4}}{2(1-\hat{p})}\right\rceil, \label{eq:single}
%\end{align}
%where $F'=(1+\eta)F$, $\hat{p}=p_0+p_1-p_0p_1$ and $\beta=\Phi^{-1}\left(\frac{0.625}{k+0.25}\right)$.

 Note that the proposed two-phase scheme in Section~\ref{sec:nopt} reduces to the coded single-phase transmission when $n^*$ obtained from \eqref{eq:optimization} is equal to the upper bound, $n_u$, given in \eqref{eq:nmax}.
Moreover, the proposed scheme should outperform the single-phase broadcast if the inter-user links are  better than the links between the source and the users, and/or the number of users is sufficiently large.

\begin{example}
Assume a file containing $F=2000$ packets is to be distributed by the source node to a group of $k$ users. The coding overhead $\eta$ is set to be $1\%$ in for both BATS code and the erasure code. The total number of transmissions required for all users to recover the file is plotted against the number of users for both the single-phase transmission and the proposed two-phase broadcast scheme with optimal number of batches.

\begin{figure}[htb]
\centering
\includegraphics[scale=0.6]{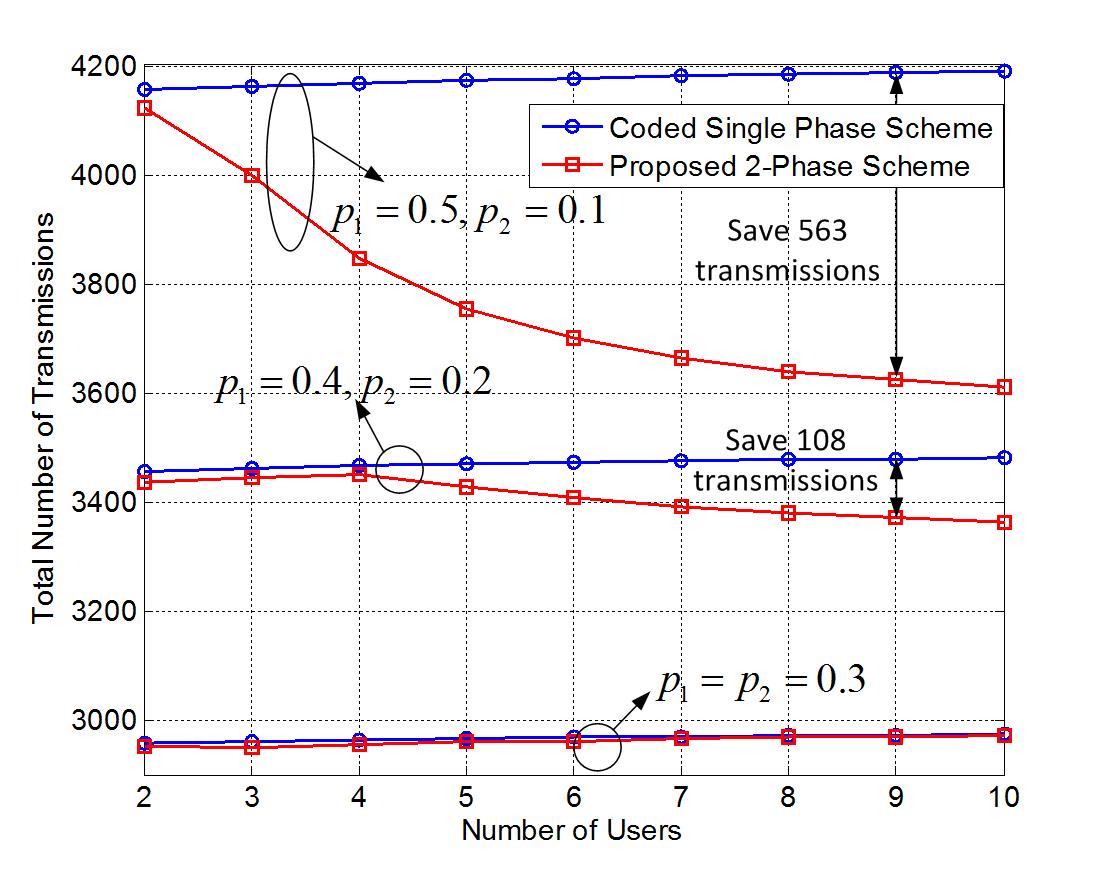}
\caption{Comparison of the proposed two-phase cooperative broadcast with single-phase transmission.}
\label{F:Compare}
\end{figure}

It is observed from Fig.~\ref{F:Compare} that the total number of transmissions increases with the number of users in the single phase scheme. In contrast, with the proposed two-phase scheme, it decreases with the number of users due to spatial diversity gain.  Furthermore, when the inter-user links have the same packet loss rate as the links between the source and the users, i.e., $p_1=p_2$, the proposed two-phase cooperative broadcast  reduces to the single-phase transmission as the optimal number of batches $n^*$ obtained from \eqref{eq:optimization} equals to $n_u$ in \eqref{eq:nmax}. When $p_2<p_1$, the proposed two-phase scheme outperforms the single-phase transmission, e.g., with $k=9$, the proposed scheme saves 108 transmissions when $p_1=0.4, p_2=0.2$, and it saves 563 transmissions when $p_1=0.5,p_2=0.1$.
\end{example}

\subsubsection{Comparison with cooperative peer-to-peer information exchange (PIE) \cite{Fan2010}}
PIE is an efficient  peer scheduling algorithm for network coding enabled wireless networks, introduced in \cite{Fan2009,Fan2010}. However, phase 1 transmission in PIE is uncoded, and hence the residual loss is inevitable. To make a fair comparison, we assume that the file is firstly encoded with some good erasure code so that the receiver can recover the file upon receiving $(1+\eta)F$ packets. The coded packets is then divided into blocks of size $M$ and sequentially broadcasted by the source node. Furthermore, since PIE is designed only for lossless inter-user channels, when there are erasures in phase 2, i.e., $p_2>0$, the scheduling algorithm will terminate before the the data block can be decoded. To achieve reliable communications in phase 2, additional retransmissions are required, which results in some performance degradation. In the following example, we compare the performance of the proposed cooperative two-phase scheme with PIE, in favor of the latter by ignoring transmission overhead used for  exchanging the state information.
\begin{example}
Consider the network shown in Fig.~\ref{F:model}, where the source node is intended to send a file with $F=2000$ packets to a group of $k$ users. The links between the source node with the users are assumed to have independent erasures with probability $p_1$.

The total number of transmissions required for all the users to recover the file with PIE and with the proposed scheme are compared in Fig.~\ref{F:PIECompare}(a), under the assumption of lossless phase 2 links, i.e., $p_2=0$. It is observed that the proposed scheme outperforms PIE when $p_1=0.2$. This is consistent with expectation because complete repair is not necessary with proposed scheme in phase 2, due to the application of BATS code, but when a packet is missed by in PIE, it must cost a retransmission to repair. When $p_1=0.5$, the received packets among different users have more diverse erasures, hence PIE, which has a centralized scheduling algorithm, slightly outperforms the proposed scheme, at the cost of additional control complexity and state information exchange. When there are erasures in phase 2 channels, i.e., $p_2=0.1$, the proposed scheme outperforms PIE for both $p_1=0.2$ and $p_1=0.5$, as shown in Fig.~\ref{F:PIECompare}(b).

\begin{figure}[htb]
\centering
\includegraphics[scale=1]{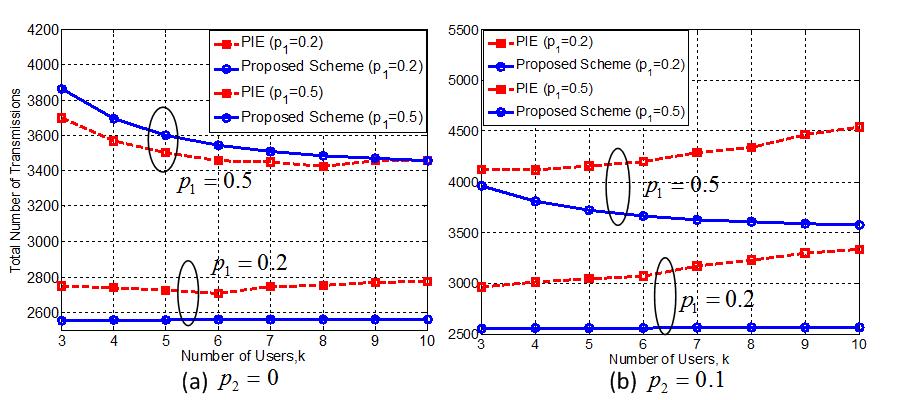}
\caption{Comparison of the proposed scheme with PIE.}
\label{F:PIECompare}
\end{figure}

\end{example}

\subsection{Computational Overhead}
The design of the proposed 2-phase cooperative broadcasting scheme involves determining the number of batches, the degree distribution of the BATS code, scheduling of each users and BATS decoding. First, the minimum number of batches can be obtained from equation \eqref{eq:nValue} directly and the optimal number of batches is determined by solving the optimization in \eqref{eq:optimization}, which is of complexity $\mathcal{O}(\frac{K}{M}\log_2K)$. When solving for the number of batches, we can also obtain the estimated number of transmissions of phase 2. Then, the corresponding channel rank distribution $\mathbf h$ can be directly computed from \eqref{eq:Rank}. Based on $\mathbf h$, the optimal degree distribution for the BATS code can be obtained by solving a linear optimization problem formulated in \cite{Yang13}. All these computations can be carried out off-line, which will not cause any communication delay. On-the-fly computations includes the scheduling and BATS decoding. The proposed scheduling algorithm is completely distributed, which consists of computing and sorting the usefulness matrix, with complexity $\mathcal{O}(Mn\log M)$. BATS decoding is based on belief propagation and inactivation decoding, with complexity $\mathcal{O}(K(M^2+ML))$  \cite{Yang11}, where $L$ is the packet length.

\subsection{Robustness}
In certain scenarios,  the number of users $k$ may be unknown. The existing P2P repair schemes in \cite{Sanigepalli2006,BOPPER,Cheung2006,CPR,DCPR,Legout2006,Fan2010,CDS,CDSextension,XORRepair}  are not applicable for such scenarios since they require the state information of all the users for designing the scheduling algorithms. In contrast, the proposed two-phase cooperative broadcasting scheme is fully distributed, hence applicable for the case of unknown number of users.

 Due to increasing diversity gain with $k$, the proposed scheme, designed for a network with $k$ users, allow a network with unknown extra users to recover the file, at the cost of some performance degradation. Denote by $L(k)$ the minimum number of transmissions required to deliver a file from the source to a group of $k$ users with the proposed two-phase protocol. Under the same setup, another network with $k'$ users, where $k'\geq k$, should also be able to recover the file after $L(k)$ transmissions because additional users bring more space diversity. In other words, the performance degradation can be bounded  by the difference between the minimum number of transmissions, i.e., $L(k)-L(k')$.

Since the batches are chosen with equal probability in phase 2, with fixed number of batches, the channel rank distributions at the moment of decoding are almost the same for different number of users. Hence, the redundant transmissions caused by extra users in the second network is mainly due to the sub-optimal choice of $n$, which is much lower than the bound $L(k)-L(k')$.

\begin{example}
Consider the network shown in Fig.~\ref{F:model}, with $p_0=0.05$, $p_1=0.5$ and $p_2=0.1$. A file containing $K=2083$ packets is to be transmitted to $k$ users with batch size $16$.  If the proposed scheme is designed for $k=3$, the optimal number of batches is set as $n^*=200$ and the degree distribution of BATS code is designed based on the estimated channel rank distribution for $k=3$. In Fig.~\ref{F:changek}, the number of required transmissions for networks with $k\geq 3$ (based on the two-phase scheme designed for $k=3$) is compared with the ideal case (where the proposed two-phase cooperative broadcast scheme is designed for the exact $k$ users). It is observed that the performance degradation increases when $k$ gets larger, as expected. However, the degradation is still marginal in proportion, for example, the degradation is less than $5\%$ even when the networks has 3 times more users than it was designed for.

\begin{figure}[htb]
\centering
\includegraphics[scale=0.6]{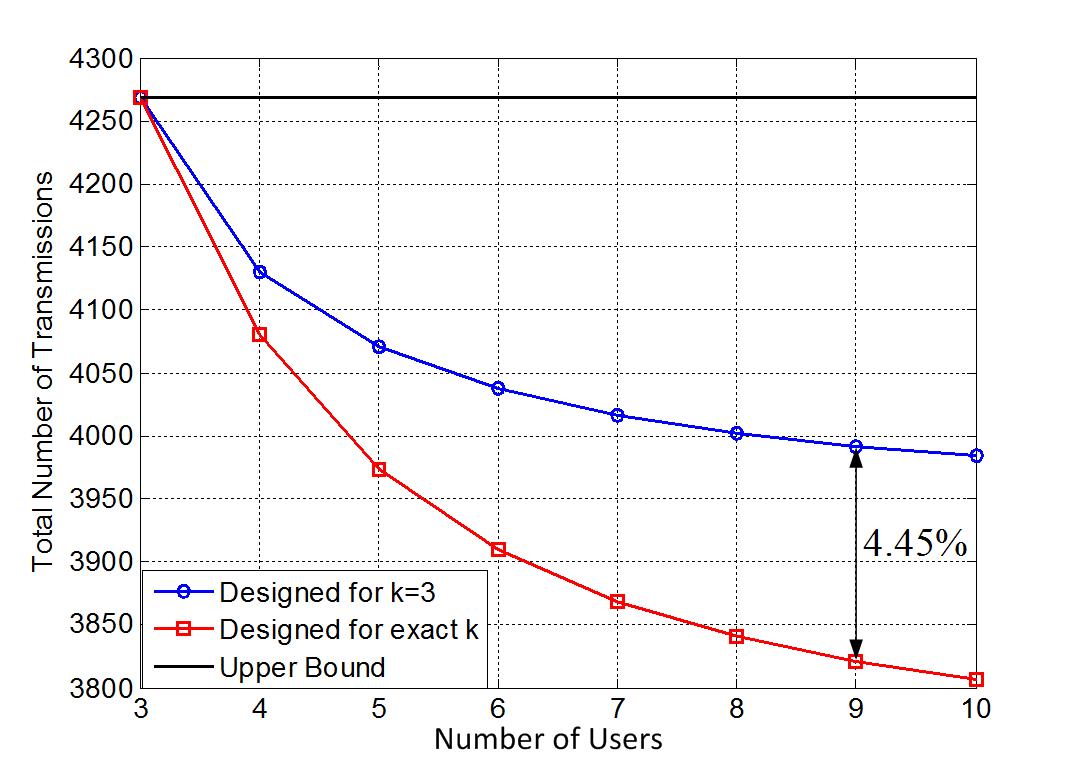}
\caption{Minimum number of transmissions versus number of users $k$. }
\label{F:changek}
\end{figure}

\end{example}

On the other hand, if the proposed scheme is designed for a $k$ larger than the actual value, the users may not be able to recover the file. Hence, when the exact value of $k$ is unknown, the proposed two-phase scheme should be designed for the minimum expected value.

\section{Experimental Results}\label{sec:experiment}
 This section evaluates the performance of the proposed two-phase broadcast protocol over a 4-node testbed based on IEEE802.11g wireless network, in order to validate the analytical results in Section~\ref{sec:main}. The testbed consists of 3 laptops as receivers and one desktop as the source. We use the HP ProBook 430G1 laptops which has inbuilt Wi-Fi and HP Z210 desktop computer which uses PROLiNK USB Wi-Fi dongle WG2000/R. The operating systems used in the laptops and the desktop  are Ubuntu 13.10 and Ubuntu 12.04, respectively. The source and receivers are connected in Ad-Hoc mode.

A picture file of size 2.1MB is distributed from the source to all the three receivers. The file is divided into 2083 packets, each of 1000 bytes. In phase 1, the source broadcast the coded packets to all its receivers, while in phase 2 the receivers exchange network-coded packets using broadcast transmission. Since the transmissions are carried out in broadcast mode, UDP is used as the transport layer protocol.

\subsection{Channel Characterization}
The IEEE 802.11 MAC operates in two modes, namely unicast and broadcast. Our testbed operates in the broadcast mode. There are two inherent problems in this mode: poor reliability and lack of back-off \cite{Katti2008}. Since there are multiple receivers in broadcast communication, it is unclear who should ACK. In the absence of ACK, there will be no retransmission. Furthermore, the broadcast sender cannot sense the medium and it will keep transmitting during collision without backing off, causing collision. These lead to packet loss at the receivers.

There are two kinds of packet loss in 802.11 broadcast: correlated loss and uncorrelated loss. The correlated loss refers to the common loss experienced by all users, which is mainly caused by collisions. On the other hand, the uncorrelated loss refers to the independent packet erasures at the receivers, which is mainly caused by interference and noise \cite{WLAN}.

\subsection{Experimental Results}
The correlated erasure probability for phase 1 is measured to be $5\%$, i.e., $p_0=0.05$. The uncorrelated loss\footnote{The instant packet loss rate changes over time due to interference and changing environment. Hence, the average packet loss rate over the transmission of entire file is used.} is set to $0.5$, i.e., $p_1=0.5$. With network coding, most of the packets sent out during phase 2 transmissions are innovative for the corresponding receivers, unless it reachs the limit determined by the total number of received packets in phase 1. Each user is able to recover the file upon receiving a sufficient number of innovative packets. Hence, it is unnecessary to differentiate the correlated and independent packet loss in phase 2. The erasure probability for phase 2 is measured to be around $p_2=0.2$, which is mainly due to congestion\footnote{802.11 broadcast mode lacks congestion control mechanism. The erasure probability for phase 2 transmission can be reduced if some congestion control mechanism, such as the pseudo-broadcast in \cite{Katti2008}, is applied.}.  A BATS code over $GF(2^8)$ is used to encoded the packets into batches of size $M=16$. The analytical channel rank distributions are derived with BATS code overhead set as $\eta=1\%$.

Based on the analysis presented in Section~\ref{sec:main}, the minimum number of batches sent by the source is $167$. To minimize the total number of transmissions, the optimal number of batches sent by the source is $211$. The number of innovative packets received is plotted against the number of transmissions in Fig.~\ref{F:n167} and Fig.~\ref{F:n211} for $n=167$ and $n=211$, respectively. Note that a packet is called innovative, if it is not a linear combination of the previously received packets. On the other hand, if a received packet is a linear combination of the previous packets within the batch, this packet is viewed as redundant packet.
\begin{figure}[htb]
\centering
\includegraphics[scale=0.6]{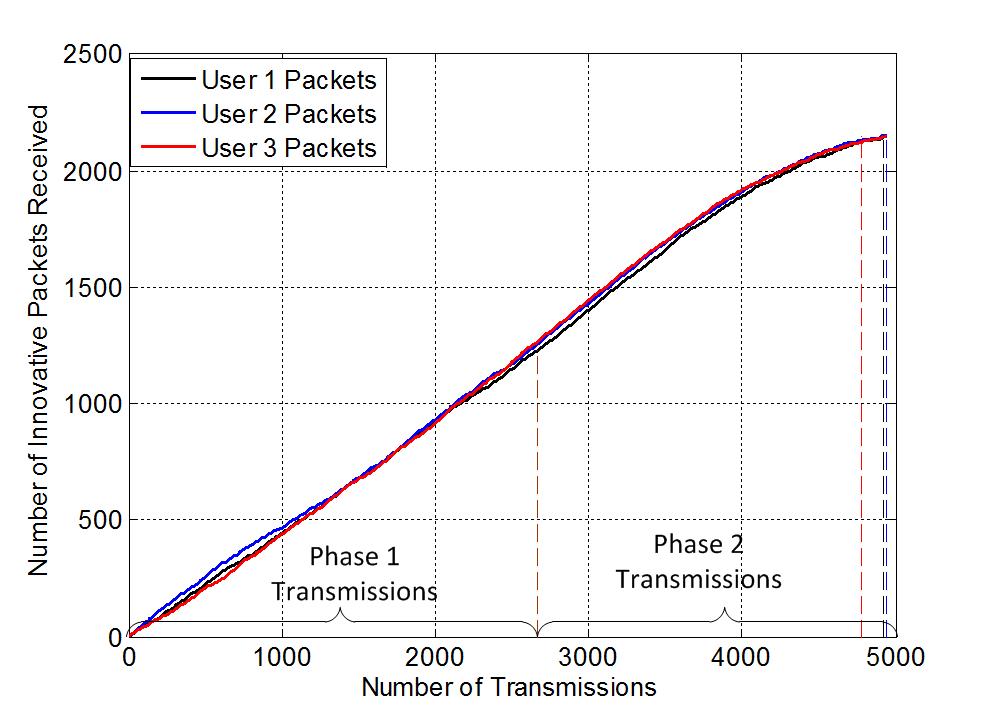}
\caption{Number of innovative packets received versus number of transmissions for $n=167$.}
\label{F:n167}
\end{figure}

\begin{figure}[htb]
\centering
\includegraphics[scale=0.6]{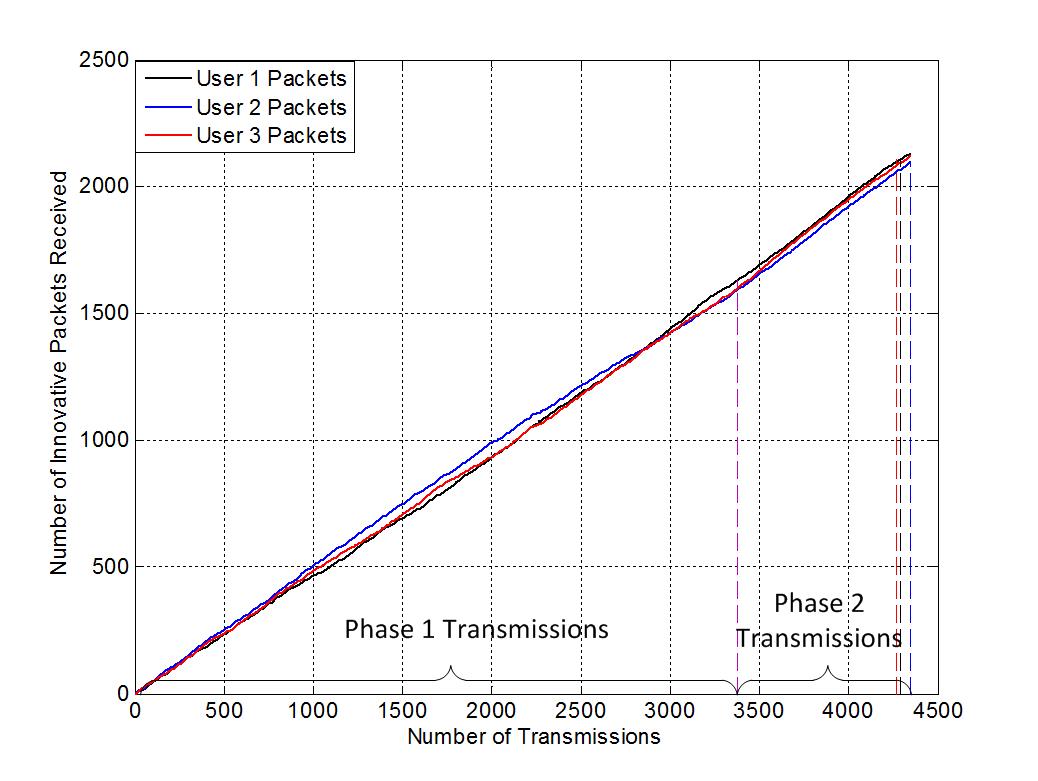}
\caption{Number of innovative packets received versus number of transmissions for $n=211$.}
\label{F:n211}
\end{figure}

For the case with $n=167$, user 1, user 2 and user 3 recover the file after receiving 2432, 2455 and 2395 packets, respectively. The total number of transmissions made by the network in both phases is 4939. The number of redundant packets received is measured to be 340 on average, which is $15\%$ of the total number of transmissions in phase 2. The redundance is close to the estimated value 325 computed from \eqref{eq:redundancy}.  The number of innovative packets used for decoding by the three users are 2091, 2088 and 2083, respectively, which means that the overhead of the BATS code is maintained within $0.4\%$. This small overhead validates our channel rank estimation made in \eqref{eq:Rank}, based on which the BATS code is designed.

For the case with $n=211$, the number of phase 2 transmissions is measured to be 968, which is close to the estimated value 956. The redundant packets received in phase 2 is 13 on average, which is  $1.34\%$ of all phase 2 transmissions. Furthermore, the file is recovered from 2091, 2086 and 2090 packets at user 1, user 2 and user 3, respectively. Hence, the coding overhead for the BATS code is maintained within $0.5\%$. The small overhead of BATS code designed based on the analytical channel rank distribution further validates our analytical results presented in Section~\ref{sec:main}. Compared with the case with $n=167$, 704 more packets are transmitted by the source in phase 1. However, the total number of transmissions in both phases is reduced from 4939 to 4344.

%\section{Networks with unknown receivers}

\section{Conclusion}\label{sec:conclusion}
In this paper, we have proposed a fully distributed two-phase cooperative broadcasting scheme based on BATS code to achieve reliable communication from the source node to a group of users. With the proposed scheme, the number of source transmissions is reduced by introducing user cooperations in phase 2. Furthermore,  the total number of retransmissions may also be reduced when the inter-user channels exploited in phase 2 have less erasures than the phase 1 channel from the source to the users. The performance of the proposed two-phase scheme has been analyzed and validated through simulations and experiments. When the power or bandwidth at the source is limited, we propose to apply the two-phase cooperative broadcast scheme with minimum number of batches. Otherwise, the proposed scheme with the optimal number of batches should be applied to minimize the total number of transmissions, and hence the communication delay. When global state information is available, the proposed two-phase protocol can be further improved by optimizing its scheduling algorithm.

\appendices
\section{Proof of Lemma \ref{lem:Delta}}\label{A:proofDelta}
Since $\mathbf{\Delta}=\mathbf{Z}-\mathbf{Y}_1$, its probability distribution can be calculated as
\begin{align}
\Pr(\mathbf{\Delta}=\delta)&=\sum_{i=0}^{M-\delta}\Pr(\mathbf{Z}=i+\delta|\mathbf{Y}_1=i)\nonumber\\
&=\sum_{i=0}^{M-\delta}{{M-i} \choose \delta}(1-p_1^{k-1})^{\delta}p_1^{(k-1)(M-i-\delta)}{M\choose i}\left[(1-p_0)(1-p_1)\right]^i(p_0+p_1-p_0p_1)^{M-i}\label{eq:before}
\end{align}
For notational convenience, denote $\hat{p}=p_0+p_1-p_0p_1$ and $\tilde{p}=(1-p_1^{k-1})\hat{p}$. Then, \eqref{eq:before} can be expressed as
\begin{align}
\Pr(\mathbf{\Delta}=\delta)&=\sum_{i=0}^{M-\delta}{{M-i} \choose \delta}\tilde{p}^{\delta}p_1^{(k-1)(M-i-\delta)}{M\choose i}(1-\hat{p})^i\hat{p}^{M-\delta-i}\nonumber\\
&={M\choose \delta}\tilde{p}^{\delta}
\sum_{i=0}^{M-\delta}{{M-\delta}\choose i}p_1^{(k-1)(M-i-\delta)}(1-\hat{p})^i\hat{p}^{M-\delta-i}\nonumber\\
&={M\choose \delta}\tilde{p}^{\delta}(1-\tilde{p})^{M-\delta}
\sum_{i=0}^{M-\delta}{{M-\delta}\choose i}\frac{p_1^{(k-1)(M-i-\delta)}(1-\hat{p})^i\hat{p}^{M-\delta-i}}{(1-\hat{p}+p_1^{k-1}\hat{p})^{M-\delta}}\nonumber\\
&={M\choose \delta}\tilde{p}^{\delta}(1-\tilde{p})^{M-\delta}\sum_{i=0}^{M-\delta}{{M-\delta}\choose i}\left(\frac{1-\hat{p}}{1-\hat{p}+p_1^{k-1}\hat{p}}\right)^i\left(\frac{p_1^{k-1}\hat{p}}{1-\hat{p}+p_1^{k-1}\hat{p}}\right)^{M-\delta-i}\nonumber\\
&={M\choose \delta}\tilde{p}^{\delta}(1-\tilde{p})^{M-\delta}.
\end{align}
This completes the proof of Lemma~\ref{lem:Delta}.
%\section*{Acknowledgement}
%This work was  supported by the Advanced Communications Research Program DSOCL06271, a research grant from the Directorate of Research and Technology (DRTech), Ministry of Defence, Singapore.

\bibliographystyle{ieeetr}
\bibliography{BATS}

\end{document}